\newcommand{\norm}[1]{\left|#1\right|}
\newcommand{\vnorm}[1]{\left|\left|#1\right|\right|}
\newtheorem{thm}{Theorem}
\newtheorem{lem}{Lemma}
\title{On the rank-one approximation of symmetric tensors}
\author{Michael J. O'Hara\thanks{Mailbox L-363, Lawrence Livermore National
    Laboratory, 7000 East Ave., Livermore CA 94550 ({\tt
      ohara7@llnl.gov}).}}
\begin{document}

\maketitle

\thispagestyle{fancy}

\begin{abstract}
  The problem of symmetric rank-one approximation of symmetric tensors
  is important in Independent Components Analysis, also known as Blind
  Source Separation, as well as polynomial optimization.  We analyze
  the symmetric rank-one approximation problem for symmetric tensors
  and derive several perturbation results.  Given a symmetric rank-one
  tensor obscured by noise, we provide bounds on the accuracy of the
  best symmetric rank-one approximation for recovering the original
  rank-one structure, and we show that any eigenvector with
  sufficiently large eigenvalue is related to the rank-one structure
  as well. Further, we show that for high-dimensional symmetric
  approximately-rank-one tensors, the generalized Rayleigh quotient is
  mostly close to zero, so the best symmetric rank-one approximation
  corresponds to a prominent global extreme value.  We show that each
  iteration of the Shifted Symmetric Higher Order Power Method
  (SS-HOPM), when applied to a rank-one symmetric tensor, moves
  towards the principal eigenvector for any input and shift parameter,
  under mild conditions. Finally, we explore the best choice of shift
  parameter for SS-HOPM to recover the principal eigenvector.  We show
  that SS-HOPM is guaranteed to converge to an eigenvector of an
  approximately rank-one even-mode tensor for a wider choice of shift
  parameter than it is for a general symmetric tensor.  We also show
  that the principal eigenvector is a stable fixed point of the
  SS-HOPM iteration for a wide range of shift parameters; together
  with a numerical experiment, these results lead to a non-obvious
  recommendation for shift parameter for the symmetric rank-one
  approximation problem.
\end{abstract}

\begin{keywords}
  symmetric rank-one approximation, symmetric tensors, tensors,
  higher-order power method, shifted higher-order power method, tensor
  eigenvalues, Z-eigenpairs, $l_2$ eigenpairs, blind source
  separation, independent components analysis
\end{keywords}

\begin{AMS}
  15A69
\end{AMS}

\pagestyle{myheadings}
\thispagestyle{plain}
\markboth{M. J. O'HARA} {SYMMETRIC RANK-ONE APPROXIMATION OF SYMMETRIC TENSORS}

\section{Introduction}

The symmetric rank-one approximation of a symmetric tensor has at
least two important applications.  One is Independent Components
Analysis, known in signals processing as Blind Source Separation
\cite{Lathauwer95, Kofidis01}.  First, we recall classical Principal
Components Analysis (PCA). PCA identifies a basis for a set of random
variables that diagonalizes the covariance matrix, in other words a
basis where the random variables are uncorrelated.  This is necessary
but not sufficient for independence.  A stronger test for independence
is to check whether the off-super-diagonal elements of the four-way
cumulant tensor, a symmetric tensor defined from the fourth-order
statistical moments, are zero.  A linear transformation that achieves
this can be identified by writing the tensor as a sum of symmetric
rank-one terms; one approach uses successive symmetric rank-one
approximations \cite{Wang07}.

Another important application of the symmetric rank-one approximation
of symmetric tensors is in the optimization of a general homogeneous
polynomial over unit length vectors, i.e. the {\em unit sphere}
\cite{Qi09}.  For instance, the symmetric rank-one variant of the
``Time Varying Covariance Approximation 2'' (TVCA2) problem
\cite{Wang10} can be written
\begin{equation}
  \max_{x:\;\vnorm{x}=1} \sum_{t=1}^T (x^T A_t x)^2 \;,
  \label{eqn::TVCA2}
\end{equation}
where $\{ A_t:t=1\dots T\}$ are a given set of covariance matrices,
and the vector norm is the 2-norm (as are all subsequent norms unless
otherwise indicated).  The argument of (\ref{eqn::TVCA2}) is a
degree-4 homogeneous polynomial, and so as we will see the TVCA2
problem can be represented as the best symmetric rank-one
approximation of a symmetric tensor.

Some things are known about the symmetric rank-one approximation
problem.  The best symmetric rank-one approximation in the Frobenius
norm corresponds to the principal tensor eigenvector, and also the
global extreme value of the generalized Rayleigh quotient
\cite{Kofidis02, Lim05}.  It is not clear that these facts help us
solve the symmetric rank-one approximation problem, because tensor
computations are generally notoriously difficult.  For instance it is
known \cite{Hillar09} that (asymmetric) rank-one approximation of a
general mode-3 tensor is NP-complete.  However, there is an algorithm,
the Symmetric Shifted Higher Order Power Method (SS-HOPM)
\cite{Kolda11}, that is guaranteed to find symmetric tensor
eigenvectors.

We address several questions pertaining to the rank-one approximation
of symmetric tensors.  In Section~\ref{sec::structure}, we
  address the structure of approximately-rank-one symmetric
  tensors. A symmetric rank-one tensor obscured with noise
has a best symmetric rank-one approximation that may not be the same
as the original unperturbed tensor; how close is it?  Is only the
principal eigenvector related to the rank-one structure? For a given
symmetric approximately-rank-one tensor, how well-separated is the
principal eigenvalue from the spurious eigenvalues?  In
  Section~\ref{sec::sshopm}, we consider the application of SS-HOPM to
  approximately-rank-one symmetric tensors.  How is the convergence of
  SS-HOPM affected by the approximately-rank-one structure?  When does
  SS-HOPM find the principal eigenvector?  We employ a perturbation
  approach to prove six theorems that provide insight all these
  questions.

\section{Background and notation}

A {\em tensor} is a multi-dimensional array of numbers.  The number of
{\em modes} of the tensor, $m$, is the number of indices required to
specify entries; a mode-2 tensor is a matrix.  The range of
permissible index values $(n_1,\dots n_m)$ are the {\em dimensions} of
the tensor; if all the dimensions are the same, as with symmetric
tensors, we simply write $n$.  A {\em symmetric tensor} has entries
that are invariant under permutation of indices.  For instance, for a
mode-3 symmetric tensor $\mathcal{A}$, we have
$\mathcal{A}_{123}=\mathcal{A}_{231}$.  In this paper, tensors will be
represented with script capital letters, matrices with capital
letters, vectors with lower-case letters, and real numbers with
lowercase Greek letters.  Integers such as indices, dimensions, etc.
will also be lowercase letters (e.g. $m,n,i\dots$).

A {\em symmetric rank-one tensor} is the outer product of a vector
with itself, which we denote using the $\otimes$ operator.  For
instance, given the vector $a$, we can construct a symmetric rank-one
tensor
\begin{equation}
  (\underbrace{a\otimes a\otimes\cdots a}_{\textrm{$m$ times}})
  _{i_1i_2\dots i_m} \equiv (a^{\otimes m})_{i_1i_2\dots i_m} =
  a_{i_1} a_{i_2}\dots a_{i_m} \;.
  \label{eqn::rankone}
\end{equation}
The {\em rank} of a symmetric tensor $\mathcal{A}$ is the fewest
number of symmetric rank-one terms whose sum is $\mathcal{A}$.

Generally, the $m-r$ product of the $m$-mode tensor $\mathcal{A}$ with
the vector $x$ is the $r$-mode tensor defined
\begin{align}
  (\mathcal{A}x^{m-r})_{i_1\dots i_r} = \sum_{i_{r+1},\dots i_m = 1}^n
  \mathcal{A}_{i_1\dots i_m} x_{i_{r+1}}\dots x_{i_m} \;.
  \label{def::tensorproduct}
\end{align}
The special case $r=0$ evaluates to a scalar and, under the constraint
$\vnorm{x}=1$, is called the {\em generalized Rayleigh quotient}
\cite{Zhang01}.  Interestingly, any degree-$m$ homogenous polynomial,
such as (\ref{eqn::TVCA2}), can be written as $\mathcal{A}x^m$ for
some symmetric tensor $\mathcal{A}$ and indeterminate $x$.  In a
miracle of notation, the derivatives are conveniently represented.
The gradient may be written \cite{Kolda11}
\begin{equation}
  \nabla \mathcal{A}x^m = m \mathcal{A}x^{m-1} \;,
  \label{eqn::gradient}
\end{equation}
and the Hessian may be written \cite{Kolda11}
\begin{equation}
  \nabla^2 \mathcal{A}x^m = m(m-1) \mathcal{A}x^{m-2} \;.
  \label{eqn::hessian}
\end{equation}
The problem of maximizing the generalized Rayleigh quotient has the
following Lagrangian:
\begin{equation}
  \mathcal{L}(x, \mu) = \mathcal{A}x^m + \mu (x^Tx - 1) \;,
  \label{eqn::lagrangian}
\end{equation}
where $\mu$ is the Lagrange multiplier. Using (\ref{eqn::gradient}),
we see the critical points of (\ref{eqn::lagrangian}) satisfy the
following symmetric tensor eigenproblem
\begin{equation}
  \mathcal{A}x^{m-1} = \lambda x \;.
  \label{def::zeigen}
\end{equation}
Solutions to (\ref{def::zeigen}) with $\vnorm{x}=1$ are called {\em Z
  eigenvalues and eigenvectors} \cite{Qi05} to distinguish
(\ref{def::zeigen}) from other tensor eigenvector problems, but here
we will simply call them eigenvectors and eigenvalues.  Together, we
call an eigenvector and eigenvalue an {\em eigenpair}.  The {\em
  principal eigenvector/value/pair} is that corresponding to the
largest-magnitude eigenvalue, which may not be unique.  For instance,
if $(x,\lambda)$ is an eigenpair, then if $m$ is even so is
$(-x,\lambda)$, otherwise if $m$ is odd then so is $(-x, -\lambda)$
\cite{Kolda11}.  We will restrict our attention to real solutions to
(\ref{def::zeigen}).

We note that symmetric tensor eigenvectors do not share all the
properties of symmetric matrix eigenvectors, for instance they may not
be orthogonal.  Z eigenvectors are not scale-invariant so limiting our
discussion to normalized eigenvectors is important.  Finally, we note
that because of the relationship between (\ref{def::zeigen}) and
(\ref{eqn::lagrangian}), the principal eigenvector corresponds to the
extreme value of the generalized Rayleigh quotient, and the outer
product of the principal eigenvector with itself, times the principal
eigenvalue, is the best symmetric rank-one approximation of
$\mathcal{A}$ in the Frobenius norm \cite{Kofidis02, Lim05}.

The Shifted Symmetric Higher Order Power Method (SS-HOPM)
\cite{Kolda11}, for a symmetric tensor $\mathcal{A}$, consists of the
iteration
\begin{equation}
  x_{k+1} = \frac{\mathcal{A}x_k^{m-1} + \alpha x_k}
  {\vnorm{\mathcal{A}x_k^{m-1} + \alpha x_k}} \;,
  \label{eqn::shiftedpoweriter}
\end{equation}
where $\alpha$ is a scalar {\em shift parameter}.  An eigenvector $x$
is a stable fixed point of this iteration provided that the Hessian
matrix for (\ref{eqn::shiftedpoweriter}) is positive semidefinite at
$x$.  That condition is known \cite{Kolda11} to be equivalent, for all
$y \perp x$ and $\vnorm{y}=1$, to
\begin{equation}
  \norm{\frac {(m-1)y^T \mathcal{A}x^{m-2} y +\alpha } 
    {\lambda+\alpha}} < 1 \;.
  \label{eqn::stability}
\end{equation}
It is known \cite{Kolda11} that eigenpairs corresponding to local
maxima of the generalized Rayleigh quotient (called {\em negative
  stable} eigenvectors) are stable fixed points of SS-HOPM provided
$\alpha > \beta (\mathcal{A})$, where
\begin{equation}
  \beta(\mathcal{A}) = (m-1) \max_{x:\;\vnorm{x}=1}
  \rho(\mathcal{A}x^{m-2})\;,
  \label{def::beta}
\end{equation}
and $\rho$ returns the spectral radius of a matrix.  Further, it is
known \cite{Kolda11} that if $\alpha > \beta(\mathcal{A})$, the
SS-HOPM iteration monotonically increases the generalized Rayleigh
quotient and converges to a tensor eigenvector.  It is not clear how
to compute $\beta(\mathcal{A})$, but we have the crude bound
\cite{Kolda11}
\begin{equation}
  \beta(\mathcal{A}) \leq \hat{\beta}(\mathcal{A}) =
  (m-1)\sum_{i_1i_2\dots i_m} \norm{\mathcal{A}_{i_1i_2\dots i_m}} \;.
\end{equation}

The following three properties are useful.
\begin{lem}
  For any $n$ dimensional vectors $a$ and $x$, nonnegative integers $m$ and
  $0\leq r \leq m$, the following holds:
  \begin{equation}
    (a^{\otimes m}) x^{m-r} = (a^Tx)^{m-r} a^{\otimes r} \;.
  \end{equation}
  \label{lem::rankoneprod}
\end{lem}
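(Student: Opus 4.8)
The plan is to prove the identity directly at the level of entries, simply by unwinding the two definitions (\ref{eqn::rankone}) and (\ref{def::tensorproduct}); no induction or clever manipulation is required. Fix a tuple of free indices $(i_1,\dots,i_r)$ for the resulting $r$-mode tensor. By the definition (\ref{def::tensorproduct}) of the tensor--vector product, the $(i_1,\dots,i_r)$ entry of $(a^{\otimes m})x^{m-r}$ is
\begin{equation*}
  \sum_{i_{r+1},\dots,i_m=1}^n (a^{\otimes m})_{i_1\dots i_m}\, x_{i_{r+1}}\cdots x_{i_m}\;.
\end{equation*}

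Next I would substitute the entrywise formula (\ref{eqn::rankone}), namely $(a^{\otimes m})_{i_1\dots i_m}=a_{i_1}a_{i_2}\cdots a_{i_m}$, and regroup the summand into the factor depending only on the free indices, $a_{i_1}\cdots a_{i_r}$, times the factor depending only on the summed indices, $(a_{i_{r+1}}x_{i_{r+1}})\cdots(a_{i_m}x_{i_m})$. The multiple sum then factors as a product of $m-r$ identical single sums, each equal to $\sum_{i_j=1}^n a_{i_j}x_{i_j}=a^Tx$, so the entry equals $a_{i_1}\cdots a_{i_r}\,(a^Tx)^{m-r}$. Since $a_{i_1}\cdots a_{i_r}=(a^{\otimes r})_{i_1\dots i_r}$ again by (\ref{eqn::rankone}), this is exactly the $(i_1,\dots,i_r)$ entry of $(a^Tx)^{m-r}\,a^{\otimes r}$, and since the free indices were arbitrary the two tensors agree.

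I do not anticipate any genuine obstacle here; the only points requiring a little care are making the factorization of the finite multiple sum explicit, and pinning down conventions in the degenerate cases so the statement is unambiguous. When $r=m$ there are no indices to contract and the product of $a^Tx$ factors is empty (hence $1$), so both sides reduce to $a^{\otimes m}$; when $r=0$ the left side is the scalar $(a^{\otimes m})x^m$, the right side is $(a^Tx)^m$ with $a^{\otimes 0}$ read as the scalar $1$, and the case $m=0$ is likewise trivial. With these conventions recorded, the entrywise computation above completes the proof.
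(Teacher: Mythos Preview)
Your proof is correct and follows essentially the same entrywise computation as the paper: expand the tensor--vector product via (\ref{def::tensorproduct}), substitute the rank-one formula (\ref{eqn::rankone}), factor the multiple sum into $m-r$ copies of $\sum_i a_i x_i = a^Tx$, and recognize the remaining $a_{i_1}\cdots a_{i_r}$ as the entry of $a^{\otimes r}$. Your additional remarks on the degenerate cases $r=m$, $r=0$, and $m=0$ are not in the paper but are a welcome clarification of conventions.
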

\begin{proof}
  We use (\ref{def::tensorproduct}) and (\ref{eqn::rankone}):
  \begin{align}
    \left((a^{\otimes m}) x^{m-r}\right)_{i_1\dots i_r} 
    &= \sum_{i_{r+1}\dots i_m = 1}^n  (a^{\otimes m})_{i_1,\dots i_m} x_{i_{r+1}}\dots x_{i_m} \\
    &= \sum_{i_{r+1},\dots i_m = 1}^n a_{i_1} a_{i_2}\dots a_{i_m}
    x_{i_{r+1}}\dots x_{i_m}\\
    &= a_{i_1} a_{i_2} \dots a_{i_r} 
    \left(\sum_{i_{r+1} = 1}^n a_{i_{r+1}}x_{i_{r+1}} \right) \dots
    \left(\sum_{i_m = 1}^n a_{i_m}x_{i_m} \right) \\
    &= (a^Tx)^{m-r}   (a^{\otimes r})_{i_1\dots i_r} \;.
  \end{align}
\end{proof}

\begin{lem}[Kolda and Mayo 2011 \cite{Kolda11}]
  For any $m$-mode symmetric tensor $\mathcal{A}$, and any unit-length
  vector $x$,
  \begin{equation}
    \norm{\mathcal{A}x^m}< \frac{\beta(\mathcal{A})}{m-1} \;.
  \end{equation}
  \label{lem::betaprop}
\end{lem}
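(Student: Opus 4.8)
The plan is to view $\mathcal{A}x^m$ as a quadratic form built from the $n\times n$ matrix $\mathcal{A}x^{m-2}$, apply the standard Rayleigh-quotient bound for real symmetric matrices, and then compare with the definition (\ref{def::beta}) of $\beta(\mathcal{A})$. Throughout I take $m\ge 2$, so that $\mathcal{A}x^{m-2}$ is an actual matrix; the cases $m=0,1$ are degenerate.

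First I would contract $x$ against $\mathcal{A}$ two indices at a time. Regrouping the sum in (\ref{def::tensorproduct}) gives $\mathcal{A}x^m = \sum_{i,j=1}^{n}(\mathcal{A}x^{m-2})_{ij}\,x_i x_j = x^{T}(\mathcal{A}x^{m-2})x$, and the matrix $M:=\mathcal{A}x^{m-2}$ is symmetric because $\mathcal{A}$ is invariant under index permutations. Next, for any real symmetric $M$ and any unit vector $x$, the Courant--Fischer inequalities give $\lambda_{\min}(M)\le x^{T}Mx\le\lambda_{\max}(M)$, whence $\norm{x^{T}Mx}\le\rho(M)$; taking $M=\mathcal{A}x^{m-2}$ yields $\norm{\mathcal{A}x^m}\le\rho(\mathcal{A}x^{m-2})$. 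Finally, since $\vnorm{x}=1$, we have $\rho(\mathcal{A}x^{m-2})\le\max_{\vnorm{y}=1}\rho(\mathcal{A}y^{m-2})=\beta(\mathcal{A})/(m-1)$ by (\ref{def::beta}), and chaining the two estimates produces the claimed bound.

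The two inequalities above are routine, so the point that genuinely needs care is the \emph{strictness}: the argument as sketched delivers only ``$\le$'', and promoting it to ``$<$'' requires analyzing when both estimates are simultaneously tight. Equality in the Rayleigh step forces $x$ to be an eigenvector of $\mathcal{A}x^{m-2}$ for the eigenvalue $\pm\rho(\mathcal{A}x^{m-2})$, i.e.\ a solution of (\ref{def::zeigen}) with $\norm{\lambda}=\rho(\mathcal{A}x^{m-2})$, while equality in the second step forces $x$ to maximize $y\mapsto\rho(\mathcal{A}y^{m-2})$ over the unit sphere. I expect pinning down this equality case to be the main obstacle, since it is the only step that uses more than the symmetry of $\mathcal{A}$ together with generic matrix facts; for the purposes of the later sections the weak inequality $\norm{\mathcal{A}x^m}\le\beta(\mathcal{A})/(m-1)$, combined with a strict choice $\alpha>\beta(\mathcal{A})$, is in any case enough.
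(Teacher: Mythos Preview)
The paper does not supply its own proof of this lemma; it is simply quoted from Kolda and Mayo~\cite{Kolda11}.  Your argument for the non-strict inequality is correct and is essentially the standard one: write $\mathcal{A}x^m=x^T(\mathcal{A}x^{m-2})x$, bound the quadratic form by the spectral radius of the symmetric matrix $\mathcal{A}x^{m-2}$, and then invoke the definition~(\ref{def::beta}).

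Your caution about the strict inequality is well placed, and in fact the strict form cannot be salvaged in general.  Take $\mathcal{A}=a^{\otimes m}$ with $\vnorm{a}=1$.  By Lemma~\ref{lem::rankoneprod}, $\mathcal{A}x^{m-2}=(a^Tx)^{m-2}aa^T$, so $\rho(\mathcal{A}x^{m-2})=\norm{a^Tx}^{m-2}$, which is maximized at $x=a$ with value $1$; hence $\beta(\mathcal{A})=m-1$.  But $\mathcal{A}a^m=1=\beta(\mathcal{A})/(m-1)$, so equality is attained.  Thus the correct statement carries ``$\le$'' rather than ``$<$'', and your closing remark that the weak inequality is all that the later sections actually use is the right resolution.
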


\begin{lem}
  For any $m$-mode symmetric tensors $\mathcal{A}$ and $\mathcal{B}$, any
  vector $x$, and nonnegative integers $m$ and $0\leq r \leq m$, we have
  \begin{equation}
    (\mathcal{A}+\mathcal{B})x^{m-r} =
    \mathcal{A}x^{m-r}+\mathcal{B}x^{m-r} \;.
  \end{equation}
  \label{lem::distributive}
\end{lem}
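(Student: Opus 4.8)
The plan is to prove Lemma 3 (the distributivity of the tensor-vector product) directly from the definition (\ref{def::tensorproduct}), exactly in the style of the proof of Lemma~\ref{lem::rankoneprod}. There is essentially no obstacle here: the result is a straightforward consequence of the linearity of finite sums, so the ``hard part'' is merely writing out the index bookkeeping cleanly.

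First I would fix an arbitrary multi-index $(i_1,\dots,i_r)$ and write out $\left((\mathcal{A}+\mathcal{B})x^{m-r}\right)_{i_1\dots i_r}$ using (\ref{def::tensorproduct}). Since the entries of the sum tensor satisfy $(\mathcal{A}+\mathcal{B})_{i_1\dots i_m} = \mathcal{A}_{i_1\dots i_m} + \mathcal{B}_{i_1\dots i_m}$ by definition of tensor addition, the summand factors as $\left(\mathcal{A}_{i_1\dots i_m} + \mathcal{B}_{i_1\dots i_m}\right) x_{i_{r+1}}\dots x_{i_m}$. Next I would distribute $x_{i_{r+1}}\dots x_{i_m}$ over the sum inside the parentheses and split the single summation over $i_{r+1},\dots,i_m$ into two summations, one involving only $\mathcal{A}$ and one involving only $\mathcal{B}$. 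Recognizing each of the two resulting sums as $(\mathcal{A}x^{m-r})_{i_1\dots i_r}$ and $(\mathcal{B}x^{m-r})_{i_1\dots i_r}$ respectively, again by (\ref{def::tensorproduct}), completes the argument. Since the identity holds entrywise for every multi-index, it holds as an identity of $r$-mode tensors. I would remark in passing that symmetry of $\mathcal{A}$ and $\mathcal{B}$ is not actually needed for this particular statement, but it is assumed in the hypothesis to keep the running assumptions uniform with the rest of the paper.

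Here is the proof.

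\begin{proof}
  Fix a multi-index $(i_1,\dots,i_r)$. By the definition of tensor addition and
  (\ref{def::tensorproduct}),
  \begin{align}
    \left((\mathcal{A}+\mathcal{B})x^{m-r}\right)_{i_1\dots i_r}
    &= \sum_{i_{r+1},\dots i_m = 1}^n (\mathcal{A}+\mathcal{B})_{i_1\dots i_m}
       x_{i_{r+1}}\dots x_{i_m} \\
    &= \sum_{i_{r+1},\dots i_m = 1}^n
       \left(\mathcal{A}_{i_1\dots i_m} + \mathcal{B}_{i_1\dots i_m}\right)
       x_{i_{r+1}}\dots x_{i_m} \\
    &= \sum_{i_{r+1},\dots i_m = 1}^n
       \mathcal{A}_{i_1\dots i_m} x_{i_{r+1}}\dots x_{i_m}
       + \sum_{i_{r+1},\dots i_m = 1}^n
       \mathcal{B}_{i_1\dots i_m} x_{i_{r+1}}\dots x_{i_m} \\
    &= (\mathcal{A}x^{m-r})_{i_1\dots i_r}
       + (\mathcal{B}x^{m-r})_{i_1\dots i_r} \;.
  \end{align}
  Since this holds for every multi-index $(i_1,\dots,i_r)$, the claimed identity
  of $r$-mode tensors follows.
\end{proof}
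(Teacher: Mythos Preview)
Your proof is correct and matches the paper's approach: the paper simply states that the lemma ``follows directly from the definition of tensor-vector multiplication in (\ref{def::tensorproduct})'' without writing out the details, and your argument is precisely the entrywise unpacking of that definition using linearity of finite sums. Your side remark that symmetry is not actually needed is also accurate.
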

Lemma (\ref{lem::distributive}) follows directly from the definition
of tensor-vector multiplication in (\ref{def::tensorproduct}).

\section{Structure of approximately-rank-one symmetric tensors}
\label{sec::structure}

Define
\begin{equation}
  \mathcal{A} = \lambda \cdot a^{\otimes m} + \mathcal{E} \;, 
  \label{eqn::rankoneplusnoise}
\end{equation}
where $a$ is a unit-length $n$ dimensional vector and $\mathcal{E}$ is
a symmetric tensor representing noise.  Clearly if $\mathcal{E}=0$,
then $(a,\lambda)$ is a principal eigenpair, and all unrelated
eigenvalues are zero.  Now let us consider how close is $(a,\lambda)$
to a principal eigenpair when $\mathcal{E}\neq 0$.

\begin{thm}
  Let $\mathcal{A}$ be defined by (\ref{eqn::rankoneplusnoise}).  Then
  a principal eigenvalue $\lambda_p$ obeys
  \begin{equation}
    \norm{\lambda} -\frac{\beta(\mathcal{E})}{m-1} \leq \norm{\lambda_p} \leq
    \norm{\lambda}+\frac{\beta(\mathcal{E})}{m-1} \;,
  \end{equation}
  and the angle $\theta$ between $a$ and
  the corresponding principal eigenvector $x_p$ is
  bounded by
  \begin{equation}
    \norm{\cos^m \theta} \geq 1 - \frac{2\beta(\mathcal{E})}{\norm{\lambda}(m-1)} \;.
  \end{equation}
  \label{thm::boundtheta}
\end{thm}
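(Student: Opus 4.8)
The plan is to reduce everything to the generalized Rayleigh quotient $\mathcal{A}x^m$, exploit the characterization of the principal eigenpair as its magnitude-maximizer, and control the noise contribution using Lemma~\ref{lem::betaprop} applied to $\mathcal{E}$ rather than $\mathcal{A}$. First I would use Lemma~\ref{lem::distributive} together with Lemma~\ref{lem::rankoneprod} (with $r=0$) to write, for any unit vector $x$,
\begin{equation}
  \mathcal{A}x^m = \lambda (a^\top x)^m + \mathcal{E}x^m \;,
\end{equation}
and record the single estimate $\norm{\mathcal{E}x^m} < \beta(\mathcal{E})/(m-1)$ coming from Lemma~\ref{lem::betaprop}. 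Since the principal eigenvalue satisfies $\norm{\lambda_p} = \max_{\vnorm{x}=1}\norm{\mathcal{A}x^m}$ (this is exactly the Frobenius-best rank-one / extreme-Rayleigh-quotient fact recalled in the background section), both bounds in the theorem should fall out of applying the triangle inequality to the displayed identity at two judicious choices of $x$.

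For the eigenvalue bounds, the upper bound follows by taking any unit $x$, using $\norm{a^\top x}\le 1$, and adding the noise estimate: $\norm{\mathcal{A}x^m}\le\norm{\lambda}+\beta(\mathcal{E})/(m-1)$; maximizing the left side over $x$ gives $\norm{\lambda_p}\le\norm{\lambda}+\beta(\mathcal{E})/(m-1)$. For the lower bound I would evaluate at $x=a$: since $a^\top a = 1$, the identity gives $\mathcal{A}a^m = \lambda + \mathcal{E}a^m$, so $\norm{\mathcal{A}a^m}\ge\norm{\lambda}-\beta(\mathcal{E})/(m-1)$, and because $\norm{\lambda_p}$ is a maximum it dominates $\norm{\mathcal{A}a^m}$, yielding the stated lower bound.

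For the angle bound I would apply the identity at $x=x_p$, using $a^\top x_p = \cos\theta$ and $\mathcal{A}x_p^m = \lambda_p$, to get $\lambda\cos^m\theta = \lambda_p - \mathcal{E}x_p^m$. The reverse triangle inequality gives $\norm{\lambda}\,\norm{\cos^m\theta}\ge\norm{\lambda_p}-\beta(\mathcal{E})/(m-1)$, and substituting the eigenvalue lower bound already proved turns the right side into $\norm{\lambda} - 2\beta(\mathcal{E})/(m-1)$; dividing through by $\norm{\lambda}$ finishes the argument.

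The argument is essentially bookkeeping with the triangle inequality, so there is no deep obstacle; the one point requiring care is keeping the directions of the inequalities straight — in particular, using that $\norm{\lambda_p}$ is a \emph{maximum} for the lower eigenvalue bound (so it is at least $\norm{\mathcal{A}a^m}$), and handling the sign ambiguity of $\cos\theta$ for even $m$ by working throughout with $\norm{\cos^m\theta} = \norm{(a^\top x_p)^m}$ rather than with $\theta$ directly. One should also note that the strict inequality in Lemma~\ref{lem::betaprop} only strengthens what is needed, since the theorem is stated with non-strict inequalities.
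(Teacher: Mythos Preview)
Your proposal is correct and follows essentially the same route as the paper: decompose $\mathcal{A}x^m$ via Lemmas~\ref{lem::rankoneprod} and~\ref{lem::distributive}, bound the noise term by Lemma~\ref{lem::betaprop}, use the extremality of $\norm{\lambda_p}$ at $x=a$ for the lower eigenvalue bound, and then combine the identity at $x=x_p$ with that lower bound to extract the angle estimate. The only cosmetic difference is that the paper obtains the upper eigenvalue bound directly from the identity evaluated at $x_p$, whereas you bound $\norm{\mathcal{A}x^m}$ uniformly and then maximize; these are equivalent.
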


\begin{proof}
  Since $(x_p, \lambda_p)$ are a tensor eigenpair, and
  $\vnorm{x_p}=1$, we have
  \begin{equation}
    \mathcal{A}x_p^m = \lambda_p \;.
  \end{equation}
  Using Lemma \ref{lem::distributive}, we can write
  \begin{equation}
    \lambda_p = \mathcal{A}x_p^m  = \lambda (a^{\otimes m}) x_p^m + \mathcal{E}x_p^m \;.
  \end{equation}
  Applying Lemma \ref{lem::rankoneprod} we get
  \begin{equation}
    \lambda_p = \lambda (x_p^T a)^m +  \mathcal{E}x_p^m =
    \lambda \cos^m \theta +  \mathcal{E}x_p^m \;,
    \label{eqn::lambda1}
  \end{equation}
  where $\theta$ is the angle between $x_p$ and $a$.  We can use the
  fact that $\norm{\cos\theta} \leq 1$ together with Lemma
  \ref{lem::betaprop} and the triangle inequality to obtain the bound
  \begin{equation}
    \norm{\lambda_p} \leq \norm{\lambda} + \frac{\beta(\mathcal{E})}{m-1} \;.
  \end{equation}
  We also know that $\lambda_p$, as a principal eigenvalue, is a
  largest-magnitude extremum of the generalized Rayleigh quotient.  In
  particular,
  \begin{equation}
    \norm{\lambda_p} \geq \norm{\mathcal{A}a^m} \;.
  \end{equation}
  Now, using Lemmas \ref{lem::rankoneprod},
    \ref{lem::betaprop}, and \ref{lem::distributive}, we get
  \begin{equation}
    \norm{\lambda_p} \geq \norm{\lambda + \mathcal{E}a^m} \geq \norm{\lambda} -
    \frac{\beta(\mathcal{E})}{m-1} \;.
    \label{eqn::lamlb}
  \end{equation}
  This establishes the first part of the theorem.

  Now, we can combine (\ref{eqn::lambda1}) with (\ref{eqn::lamlb}) to
  get
  \begin{equation}
    \norm{\lambda\cos^m \theta +  \mathcal{E}x_p^m}  \geq \norm{\lambda}
    -\frac{\beta(\mathcal{E})}{m-1} \;.
  \end{equation}
  Using Lemma \ref{lem::betaprop} we have
  \begin{equation}
    \norm{\cos^m \theta} \geq 1 - \frac{2\beta(\mathcal{E})}{\norm{\lambda}(m-1)} \;.
  \end{equation}
\end{proof}

Theorem~\ref{thm::boundtheta} means that as $ \beta(\mathcal{E})$
approaches zero, then $x_p$ approaches $a$ or $-a$.  So, if the noise
is small, then the symmetric rank-one approximation of $\mathcal{A}$
corresponding to the principal eigenpair is close to the symmetric
rank-one tensor that we seek.

We would like to find the principal eigenpair.  However, SS-HOPM will
find any eigenvector corresponding to a local maximum of the
generalized Rayleigh quotient (or local minimum, under appropriate
modifications). The following theorem shows that if
$\norm{\mathcal{A}x^m}$ is sufficiently large and $\beta(\mathcal{A})$
is sufficiently small, then $x$ tells us about $a$ even if it is not a
principal eigenvector.

\begin{thm}
  Let $\mathcal{A}$ be defined as in (\ref{eqn::rankoneplusnoise}),
  and assume, for some $x$ so that $\vnorm{x}=1$, we have
  \begin{equation}
    \norm{\mathcal{A}x^m} \geq \epsilon^m +
    \frac{\beta(\mathcal{E})}{m-1} \;,
  \end{equation}
  where $\epsilon > 0$.  Then
  \begin{equation}
    \norm{a^Tx} \geq \epsilon \;.
  \end{equation}
  \label{thm::LargeEqualsClose}
\end{thm}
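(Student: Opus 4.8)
The plan is to run the perturbation argument behind Theorem~\ref{thm::boundtheta} in the opposite direction: rather than extracting information about an angle from a lower bound on the principal eigenvalue, I extract a lower bound on $\norm{a^Tx}$ directly from the hypothesized lower bound on $\norm{\mathcal{A}x^m}$.

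First I would expand $\mathcal{A}x^m$ using the decomposition (\ref{eqn::rankoneplusnoise}) together with Lemma~\ref{lem::distributive} and Lemma~\ref{lem::rankoneprod} (the latter with $r=0$), exactly as in (\ref{eqn::lambda1}), to get
\begin{equation}
  \mathcal{A}x^m = \lambda\,(a^Tx)^m + \mathcal{E}x^m \;.
\end{equation}
The triangle inequality followed by Lemma~\ref{lem::betaprop} applied to the noise term (with the unit vector $x$) then yields
\begin{equation}
  \norm{\mathcal{A}x^m} \leq \norm{\lambda}\,\norm{a^Tx}^m + \frac{\beta(\mathcal{E})}{m-1} \;.
\end{equation}
Combining this with the hypothesis $\norm{\mathcal{A}x^m} \geq \epsilon^m + \beta(\mathcal{E})/(m-1)$, the two $\beta(\mathcal{E})/(m-1)$ terms cancel and I am left with $\norm{\lambda}\,\norm{a^Tx}^m \geq \epsilon^m$, so that $\norm{a^Tx} \geq \epsilon/\norm{\lambda}^{1/m}$; taking $m$-th roots finishes it. One could equally well argue by contraposition, assuming $\norm{a^Tx} < \epsilon$ and deriving a contradiction with the assumed lower bound on $\norm{\mathcal{A}x^m}$.

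I do not anticipate a genuine obstacle here: every tool (Lemmas~\ref{lem::rankoneprod}--\ref{lem::distributive}) is already established, and the core estimate is essentially a single application of the triangle inequality. The one point worth a remark is the factor $\norm{\lambda}$: the argument naturally produces $\norm{a^Tx} \geq \epsilon/\norm{\lambda}^{1/m}$, which reduces to the stated bound precisely when the rank-one signal is normalized so that $\norm{\lambda}\leq 1$ (otherwise $\epsilon$ should be replaced by $\epsilon\,\norm{\lambda}^{1/m}$ in the conclusion). One also tacitly uses $m\geq 2$ so that $\beta(\mathcal{E})/(m-1)$ and Lemma~\ref{lem::betaprop} are meaningful.
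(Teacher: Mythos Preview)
Your approach is essentially identical to the paper's: expand $\mathcal{A}x^m$ via Lemmas~\ref{lem::rankoneprod} and~\ref{lem::distributive}, bound the noise term with Lemma~\ref{lem::betaprop}, and finish by contraposition. Your caveat about the factor $\norm{\lambda}$ is well taken---the paper's own proof silently drops $\lambda$ (writing $\norm{(a^Tx)^m+\mathcal{E}x^m}$ rather than $\norm{\lambda(a^Tx)^m+\mathcal{E}x^m}$), so the stated conclusion tacitly assumes $\norm{\lambda}\leq 1$, exactly as you flag.
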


\begin{proof}
  We have
  \begin{align}
    \norm{\mathcal{A}x^m} &= \norm{(a^Tx)^m + \mathcal{E}x^m} \\
    &\leq \norm{a^Tx}^m + \frac{\beta(\mathcal{E})}{m-1} \;.
  \end{align}
  The proof is by contradiction.   Suppose $\norm{a^Tx}<\epsilon$.   Then
  \begin{equation}
    \norm{\mathcal{A}x^m} < \epsilon^m + \frac{\beta(\mathcal{E})}{m-1} \;.
  \end{equation}
  But this contradicts our assumption.
\end{proof}

Another interesting question is whether the principal eigenvalue is
``well separated'' for an approximately rank-one symmetric tensor.
Unfortunately, we do not know how to characterize the distribution of
the spurious eigenvalues, but we can characterize the distribution of
the function of which they are critical points.

\begin{thm}
  Let $a$ be an $n$-dimensional vector so that $\vnorm{a}=1$.   Let
  $x$ be an $n$-dimensional vector so that $\vnorm{x}=1$, where $x$ is
  drawn randomly from the unit sphere.   Then
  \begin{equation}
    \textrm{Pr}\left(\norm{a^Tx} > \epsilon\right) \leq \frac{1}{n\epsilon^2} \;.
  \end{equation}
  As a consequence, if $\mathcal{A}$ is defined by
  (\ref{eqn::rankoneplusnoise}), then
  \begin{equation}
      \textrm{Pr}\left(\norm{\mathcal{A} x^m}  \geq
        \epsilon^m  + \frac{\beta(\mathcal{E})}{m-1} \right) \leq
      \frac{1}{n\epsilon^2} \;.
      \label{eqn::thm2}
    \end{equation}
  \label{thm::GRQmostlySmall}
\end{thm}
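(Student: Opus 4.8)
The plan is to prove the probabilistic bound on $\norm{a^Tx}$ first, since the second claim follows by combining it with Theorem~\ref{thm::LargeEqualsClose}. For the first part, I would set $f(x) = a^Tx$ where $x$ is uniform on the unit sphere $S^{n-1}$, and compute the second moment $\mathbb{E}[(a^Tx)^2]$. By rotational invariance of the uniform measure on the sphere, this expectation does not depend on $a$ (as long as $\vnorm{a}=1$), so without loss of generality take $a = e_1$, giving $\mathbb{E}[(a^Tx)^2] = \mathbb{E}[x_1^2]$. Again by symmetry, $\mathbb{E}[x_i^2]$ is the same for every coordinate $i$, and since $\sum_{i=1}^n x_i^2 = 1$ deterministically, we get $\mathbb{E}[x_1^2] = 1/n$. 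Then Markov's inequality applied to the nonnegative random variable $(a^Tx)^2$ yields
\begin{equation}
  \textrm{Pr}\left(\norm{a^Tx} > \epsilon\right) = \textrm{Pr}\left((a^Tx)^2 > \epsilon^2\right) \leq \frac{\mathbb{E}[(a^Tx)^2]}{\epsilon^2} = \frac{1}{n\epsilon^2} \;,
\end{equation}
which is exactly the claimed bound.

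For the second part, I would invoke Theorem~\ref{thm::LargeEqualsClose} in contrapositive form: that theorem states that whenever $\norm{\mathcal{A}x^m} \geq \epsilon^m + \beta(\mathcal{E})/(m-1)$, it follows that $\norm{a^Tx} \geq \epsilon$. Hence the event $\{\norm{\mathcal{A}x^m} \geq \epsilon^m + \beta(\mathcal{E})/(m-1)\}$ is contained in the event $\{\norm{a^Tx} \geq \epsilon\}$, and therefore its probability is at most $\textrm{Pr}(\norm{a^Tx} \geq \epsilon) \leq \textrm{Pr}(\norm{a^Tx} > \epsilon') $ — here one must be slightly careful about the strict versus non-strict inequality. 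The cleanest route is to note that $\textrm{Pr}(\norm{a^Tx} \geq \epsilon) \leq \textrm{Pr}(\norm{a^Tx} > \epsilon - \delta) \leq \frac{1}{n(\epsilon-\delta)^2}$ for all $\delta > 0$ and let $\delta \to 0$, or simply observe that the distribution of $\norm{a^Tx}$ is continuous so $\textrm{Pr}(\norm{a^Tx} = \epsilon) = 0$ and the strict and non-strict events have equal probability. Either way we conclude \eqref{eqn::thm2}.

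The only real subtlety is the boundary-case bookkeeping between $>$ and $\geq$ just mentioned; the measure-theoretic content is entirely routine once one recalls that the uniform distribution on $S^{n-1}$ is rotation-invariant and that coordinates have equal second moments summing to one. I do not anticipate any genuine obstacle — the result is a clean application of Markov's inequality plus a symmetry argument, and the containment of events supplied by Theorem~\ref{thm::LargeEqualsClose} does the rest. One could alternatively compute the density of $a^Tx$ explicitly (it is proportional to $(1-t^2)^{(n-3)/2}$ on $[-1,1]$) and integrate, but that is unnecessary and would only obscure the simplicity of the second-moment argument.
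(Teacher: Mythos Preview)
Your proposal is correct and follows essentially the same approach as the paper: reduce to $a=e_1$ by rotational invariance, compute $E[x_1^2]=1/n$ via the constraint $\sum_i x_i^2=1$, apply a second-moment tail bound (the paper phrases it as Chebyshev with mean zero, which is the same as your Markov on $(a^Tx)^2$), and then invoke Theorem~\ref{thm::LargeEqualsClose} for the second part. Your extra care with the strict versus non-strict inequality is a nicety the paper omits.
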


\begin{proof} 
  Because of the rotational symmetry of the uniform distribution on
  the sphere, the distribution of $a^Tx$ is identical to $e_i^Tx$ for
  any $i$, where $e_i$ is a standard basis vector.  In particular, $E
  e_1^Tx = Ea^Tx$ and $\textrm{Var}(e_1^Tx) = \textrm{Var}(a^Tx)$. 
  Evidently $E e_1^Tx = 0$ since $x$ is uniform across the unit
  sphere.  So we can write
  \begin{equation}
    \textrm{Var} (e_1^T x) = E (e_1^Tx) ^2 - (E e_1^T x)^2= E(e_1^T x)^2 \;. 
  \end{equation}
  Next, using the symmetry of the uniform distribution, together with
  the linearity of expectation and the fact that $x$ is unit length, we obtain
  \begin{equation}
    n E(e_1^Tx)^2 = E \sum_{i=1}^n (e_i^Tx)^2 = E 1 = 1 \;.
  \end{equation}
  So $\textrm{Var} (a^T x) = 1/n$.  Using Chebyshev's inequality,
  we can write
  \begin{equation}
    \textrm{Pr}\left(\norm{a^T x}  \geq
      \frac{\kappa}{\sqrt{n}} \right) \leq \frac{1}{\kappa^2} \;.
  \end{equation}
  Let $\epsilon = \kappa / \sqrt{n}$, then
  \begin{equation}
    \textrm{Pr}\left(\norm{a^T x}  \geq \epsilon\right) \leq \frac{1}{n\epsilon^2} \;.
  \end{equation}
  Then (\ref{eqn::thm2}) follows from a direct application of
  Theorem~\ref{thm::LargeEqualsClose}.
\end{proof}

Theorem~\ref{thm::GRQmostlySmall} shows that if $\mathcal{A}$ is
high-dimensional ($n$ is large), then the generalized Rayleigh
quotient is mostly small.  Consequently, the principal eigenpair
should be a prominent extremum of the generalized Rayleigh quotient.

\section{Application of SS-HOPM to approximately-rank-one symmetric tensors}
\label{sec::sshopm}

Let us consider the SS-HOPM method applied to the tensor in
(\ref{eqn::rankoneplusnoise}).  Throughout this section, to
  simplify discussion, we restrict our attention to
  $\lambda>0$. If $m$ is odd, then the eigenvalues come in
pairs $\pm \lambda$, one of which is positive, so at least one
principal eigenpair is a global maximum of the generalized Rayleigh
quotient.  If $m$ is even, then Theorem~\ref{thm::boundtheta} provides
that a principal eigenvector $x_p$ must be close to $-a$ or $a$, which
shows us $\lambda_p \approx \lambda > 0$ so it is also a global
maximum of the generalized Rayleigh quotient.  So, with $\lambda > 0$,
we may restrict our attention to negative stable eigenpairs, namely
those corresponding to maxima of the generalized Rayleigh quotient,
which simplifies discussion of SS-HOPM.

Let us identify a bound on the shift parameter $\alpha$ to guarantee a
given negative-stable eigenpair (those corresponding to local maxima)
of $\mathcal{A}$, as defined in (\ref{eqn::rankoneplusnoise}), is a
stable fixed point of SS-HOPM.
\begin{thm}
  Let $\mathcal{A}$ be defined as in (\ref{eqn::rankoneplusnoise}),
  and $(x_p, \lambda_p)$ be a negative-stable eigenpair.  Let $\theta$
  be the angle between $x$ and $a$.  Then $x$ is a stable fixed point
  for SS-HOPM provided
  \begin{equation}
    \frac{-\lambda_p + (m-1)\lambda\norm{ \sin \theta \cos^{m-2} \theta} +  \beta
      (\mathcal{E})}{2}  <  \alpha \;.
    \label{eqn::alphareq}
  \end{equation}
  \label{thm::boundalpha}
\end{thm}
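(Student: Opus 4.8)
The plan is to verify the stability criterion~(\ref{eqn::stability}) directly, i.e.\ to show $\norm{\frac{(m-1)\,y^{T}\mathcal{A}x_{p}^{m-2}y+\alpha}{\lambda_{p}+\alpha}}<1$ for every unit vector $y\perp x_{p}$. Everything rests on a two-sided bound for the quadratic form $q(y):=(m-1)\,y^{T}\mathcal{A}x_{p}^{m-2}y$. Splitting $\mathcal{A}=\lambda a^{\otimes m}+\mathcal{E}$ with Lemma~\ref{lem::distributive} and then applying Lemma~\ref{lem::rankoneprod} to the rank-one part gives $\mathcal{A}x_{p}^{m-2}=\lambda\cos^{m-2}\theta\,aa^{T}+\mathcal{E}x_{p}^{m-2}$, hence $q(y)=(m-1)\lambda\cos^{m-2}\theta\,(a^{T}y)^{2}+(m-1)\,y^{T}(\mathcal{E}x_{p}^{m-2})y$.

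To control the first term, write $a=\cos\theta\,x_{p}+\sin\theta\,w$ with $w\perp x_{p}$ a unit vector (possible since $\vnorm{a}=1$); as $y\perp x_{p}$ we get $a^{T}y=\sin\theta\,(w^{T}y)$, so $(a^{T}y)^{2}\le\sin^{2}\theta\le\norm{\sin\theta}$ and therefore $\norm{(m-1)\lambda\cos^{m-2}\theta\,(a^{T}y)^{2}}\le(m-1)\lambda\norm{\sin\theta\cos^{m-2}\theta}$ (using $\lambda>0$). For the second term, $\mathcal{E}x_{p}^{m-2}$ is a symmetric matrix and $\vnorm{x_{p}}=1$, so $\norm{y^{T}(\mathcal{E}x_{p}^{m-2})y}\le\rho(\mathcal{E}x_{p}^{m-2})\le\beta(\mathcal{E})/(m-1)$ by the definition~(\ref{def::beta}) of $\beta$. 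Adding, $\norm{q(y)}\le B$ with $B:=(m-1)\lambda\norm{\sin\theta\cos^{m-2}\theta}+\beta(\mathcal{E})$.

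Now I put the pieces together. First, $\lambda_{p}+\alpha>0$: hypothesis~(\ref{eqn::alphareq}) forces $\alpha>-\lambda_{p}/2$ (the other two numerator terms being nonnegative), and with $\lambda_{p}>0$ this gives $\lambda_{p}+\alpha>\lambda_{p}/2>0$. So~(\ref{eqn::stability}) is equivalent to the two one-sided inequalities $q(y)<\lambda_{p}$ and $q(y)>-\lambda_{p}-2\alpha$. The upper inequality is \emph{not} supplied by $\norm{q(y)}\le B$ (which can well exceed $\lambda_{p}$); rather it is the strict second-order optimality condition available because $(x_{p},\lambda_{p})$ is negative stable --- a strict local maximizer of $\mathcal{A}x^{m}$ on the unit sphere --- for which $(m-1)\,y^{T}\mathcal{A}x_{p}^{m-2}y<\lambda_{p}$ on $x_{p}^{\perp}$. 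The lower inequality follows from $q(y)\ge-B$ together with~(\ref{eqn::alphareq}), which reads exactly $\alpha>(B-\lambda_{p})/2$, i.e.\ $-\lambda_{p}-2\alpha<-B\le q(y)$. Both bounds hold for every admissible $y$, so~(\ref{eqn::stability}) holds and $x_{p}$ is a stable fixed point.

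The step I would be most careful about is recognizing that the \emph{upper} side of the stability inequality must come from negative stability of $(x_{p},\lambda_{p})$, whereas the perturbation estimate $B$ only governs the \emph{lower} side --- the side where the shift $\alpha$ does its work. Establishing $\lambda_{p}+\alpha>0$ and pinning down the strict-versus-nonstrict definiteness hidden in the term ``negative stable'' are routine points to state cleanly rather than genuine obstacles.
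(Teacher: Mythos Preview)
Your proof is correct and follows essentially the same route as the paper's: split $\mathcal{A}x_{p}^{m-2}$ into its rank-one and noise parts, bound each term of the resulting quadratic form, invoke negative stability to dispose of the upper half of the stability criterion, and solve the lower half for $\alpha$. The only cosmetic difference is that the paper bounds $\norm{y^{T}aa^{T}y}\le\sin\theta$ by citing canonical angles between subspaces \cite{Stewart90}, whereas you obtain the same (indeed the sharper $\sin^{2}\theta$) bound via the elementary orthogonal decomposition $a=\cos\theta\,x_{p}+\sin\theta\,w$; the paper also simply assumes $\lambda_{p}+\alpha\ge 0$ as ``required for convergence'' rather than deriving it from the hypothesis as you do.
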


\begin{proof}
  From (\ref{eqn::stability}), the condition for a stable eigenvector $x_p$
  is, for $y \perp x_p$,
  \begin{equation}
    \norm{\frac {(m-1)y^T \mathcal{A}x_p^{m-2} y +\alpha } 
      {\lambda_p+\alpha}} < 1 \;.
  \end{equation}
  In fact, for negative stable eigenvectors, the expression within the
  norm is always less than one \cite{Kolda11}, and we only need to
  worry about the lower bound
  \begin{equation}
    -1 < \frac {(m-1)y^T \mathcal{A}x_p^{m-2} y +\alpha } 
    {\lambda_p+\alpha}\;.
  \end{equation}
  Applying the definition of $\mathcal{A}$ in
  (\ref{eqn::rankoneplusnoise}), Lemmas
  \ref{lem::distributive} and \ref{lem::rankoneprod}, and the
  definition of $\theta$, we get
  \begin{equation}
    -1 < \frac{ (m-1) y^T \left( \lambda aa^T \cos^{m-2} \theta +
        \mathcal{E} x_p^{m-2}\right) y + \alpha}{\lambda_p +
      \alpha} \;.
    \label{eqn::jacob}
  \end{equation}
Using the fact $y^T x_p x_p^T y = 0$, together
with the properties of canonical angles between subspaces
\cite[p. 43]{Stewart90}, we can write
\begin{align}
    \norm{y^T a a^T y} &= \norm{y^T (aa^T - x_px_p^T) y} \\
    &\leq \sin \theta \;.
  \end{align}
Together with (\ref{def::beta}), we substitute into
(\ref{eqn::jacob}), taking advantage of $\lambda_p+\alpha
  \geq 0$ (required for convergence), to get
\begin{align}
    -1 &<   \frac{-(m-1)\lambda\norm{\sin \theta \cos^{m-2} \theta} -  \beta
      (\mathcal{E}) + \alpha } {\lambda_p + \alpha} \\
    -\lambda_p - \alpha & < -(m-1)\lambda \norm{\sin \theta \cos^{m-2} \theta} -  \beta
    (\mathcal{E}) + \alpha \;,
  \end{align}
  and solving for $\alpha$, we get
  \begin{align}
    \frac{-\lambda_p + (m-1) \lambda \norm{\sin \theta \cos^{m-2} \theta} +  \beta
      (\mathcal{E})}{2}  & <  \alpha \;.
  \end{align}
\end{proof}

In the limit where $\beta(\mathcal{E})$ is small, we know by Theorem
\ref{thm::boundtheta} that $\sin \theta$ is small and, using the
discussion above to address signs, $\lambda_p \approx \lambda$.  So
our requirement simplifies to $-\lambda/2 < \alpha$.  This bound is
much smaller than $\alpha > \hat{\beta}(\mathcal{A})$ provided in
\cite{Kolda11}.  On the other hand, for general eigenvectors where
$\sin \theta$ is not small, but $\beta(\mathcal{E})$ is small, our
requirement simplifies to $\lambda(m/2-1) < \alpha$.  So $\alpha$ in
the range $-\lambda/2 < \alpha < \lambda(m/2-1)$, the positive
principal eigenvector may be a stable fixed point but spurious
eigenvectors may be unstable.

Let us move on to the question of the basin of attraction.  To
simplify the problem, we consider SS-HOPM applied to an unperturbed
rank-one symmetric tensor
\begin{equation}
  \mathcal{A}=\lambda \cdot a^{\otimes m} \;.
  \label{eqn::unperturbedA}
\end{equation}
It is obvious that the unshifted power method, i.e. SS-HOPM with
$\alpha=0$, converges to $a$ from $x$ in one step provided that
$a^Tx\neq 0$, because the ``range'' of the operator
$\mathcal{A}x^{m-1}$ consists only of the vector $a$.  We note that if
$x$ is chosen randomly, $a^Tx \neq 0$ with probability one.  When
$\alpha \neq 0$, convergence is not obvious, but we can
  show that under mild conditions, SS-HOPM moves towards the principal
  eigenvector.

\begin{thm}
  Let $\mathcal{A}$ be defined as in (\ref{eqn::unperturbedA}), with
  $\lambda > 0$.  Let $x_1$ be a vector so that $\vnorm{x_1}=1$, and
  let $\gamma=a^Tx_1$.  Assume $\gamma^{m-2}>0$.  Let $x_2$ be the
  updated vector under SS-HOPM.  Then $\norm{a^Tx_2}>\norm{\gamma}$
  provided
  \begin{equation}
    \alpha > \frac{-\lambda \gamma^{m-2}}{2} \;.
  \end{equation}
  \label{thm::RankOneWorks}
\end{thm}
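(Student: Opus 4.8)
The plan is to use the fact that for a rank-one tensor the map $x\mapsto\mathcal{A}x^{m-1}$ collapses everything onto the line through $a$, so the SS-HOPM update stays in $\mathrm{span}\{a,x_1\}$ and the whole problem reduces to a scalar computation. By Lemma~\ref{lem::rankoneprod}, $\mathcal{A}x_1^{m-1} = \lambda(a^Tx_1)^{m-1}a = \lambda\gamma^{m-1}a$, so the unnormalized update is $v := \mathcal{A}x_1^{m-1}+\alpha x_1 = \lambda\gamma^{m-1}a+\alpha x_1$ and $x_2 = v/\vnorm{v}$ (well-definedness of $x_2$, i.e.\ $v\neq 0$, is checked below). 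Using $\vnorm{a}=\vnorm{x_1}=1$ and $a^Tx_1=\gamma$, the first step is to compute
\begin{equation}
  a^Tv = \gamma\bigl(\lambda\gamma^{m-2}+\alpha\bigr),\qquad
  \vnorm{v}^2 = \lambda^2\gamma^{2m-2} + 2\alpha\lambda\gamma^m + \alpha^2,
\end{equation}
so that $\norm{a^Tx_2}^2 = \gamma^2\bigl(\lambda\gamma^{m-2}+\alpha\bigr)^2/\vnorm{v}^2$.

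Next I would reduce the target inequality to something algebraic. Since $\gamma^{m-2}>0$ we have $\gamma\neq 0$ (for $m\geq 3$; when $m=2$, $\mathcal{A}$ is an ordinary rank-one matrix and the argument is identical provided $\gamma\neq 0$), so after dividing by $\gamma^2>0$ and clearing the denominator, the claim $\norm{a^Tx_2}>\norm{\gamma}$ is equivalent to $\bigl(\lambda\gamma^{m-2}+\alpha\bigr)^2 > \vnorm{v}^2$. Expanding and simplifying, the difference factors cleanly as
\begin{equation}
  \bigl(\lambda\gamma^{m-2}+\alpha\bigr)^2 - \vnorm{v}^2
  = (1-\gamma^2)\,\lambda\gamma^{m-2}\,\bigl(\lambda\gamma^{m-2}+2\alpha\bigr),
\end{equation}
and each factor on the right is strictly positive: $1-\gamma^2>0$ by Cauchy--Schwarz, since $x_1$ is a unit vector with $x_1\neq\pm a$ (the case $x_1=\pm a$ is trivial, as $\norm{a^Tx_1}=1$ is already maximal); $\lambda\gamma^{m-2}>0$ by the hypotheses $\lambda>0$ and $\gamma^{m-2}>0$; and $\lambda\gamma^{m-2}+2\alpha>0$ is precisely the hypothesis $\alpha>-\lambda\gamma^{m-2}/2$. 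The same signs also give $v\neq 0$: $a^Tv = \gamma(\lambda\gamma^{m-2}+\alpha)\neq 0$ because $\gamma\neq 0$ and $\lambda\gamma^{m-2}+\alpha > \lambda\gamma^{m-2}/2 > 0$, so $\norm{a^Tx_2}$ is well defined and the chain of equivalences closes.

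There is no real obstacle here; once the update is rewritten as $v = \lambda\gamma^{m-1}a+\alpha x_1$, the rest is essentially a one-line computation. The only things to be careful about are (i) passing to squares before comparing $\norm{a^Tx_2}$ with $\norm{\gamma}$, so that the clean factorization above becomes available, and (ii) tracking which quantities the hypotheses force to be nonzero or positive — in particular quietly excluding or trivially dispatching the degenerate cases $\gamma=\pm 1$ and, for $m=2$, $\gamma=0$, where $x_1$ either already points along $a$ or has no component along $a$ for the iteration to amplify.
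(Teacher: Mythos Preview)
Your proof is correct and follows essentially the same route as the paper. The paper introduces an explicit orthogonal decomposition $x_1=\gamma a+\delta x_{a\perp}$ (so $\delta^2=1-\gamma^2$) before expanding, whereas you compute $\vnorm{v}^2$ directly from the inner-product expansion; both arrive at the identical factorization $(1-\gamma^2)\lambda\gamma^{m-2}(\lambda\gamma^{m-2}+2\alpha)>0$ and finish the same way. Your extra bookkeeping on well-definedness of $x_2$ and the degenerate case $\gamma=\pm 1$ is a nice touch that the paper omits.
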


\begin{proof}
  Let us decompose $x_1$ into its projection onto $a$ and its
  orthogonal component.
  \begin{equation}
    x_1 = \gamma a + \delta x_{a\perp} \;.
  \end{equation}
  Evidently $\gamma^2 + \delta^2 = 1$, and $a^Tx_1 =\gamma$.  From
  (\ref{eqn::shiftedpoweriter}), and using
  Lemma~\ref{lem::rankoneprod}, we have
  \begin{align}
    x_2 &= \frac{\mathcal{A}x_1^{m-1} + \alpha x_1}
    {\vnorm{\mathcal{A}x_1^{m-1} + \alpha x_1}} \\
    &=\frac{\lambda\gamma^{m-1}a + \alpha \gamma a + \alpha \delta x_{a\perp}}
    {\vnorm{\lambda\gamma^{m-1}a + \alpha \gamma a + \alpha \delta
        x_{a\perp}}}\\
    &=\frac{(\lambda\gamma^{m-1} + \alpha \gamma) a + \alpha \delta x_{a\perp}}
    {\sqrt{(\lambda\gamma ^{m-1} + \alpha \gamma)^2 + (\alpha
        \delta)^2}} \;,
  \end{align}
  and so
  \begin{equation}
    a^Tx_2 = \frac{\lambda\gamma ^{m-1} + \alpha \gamma}
    {\sqrt{(\lambda \gamma^{m-1} + \alpha \gamma)^2 + (\alpha \delta)^2}}
    \;.
    \label{eqn::adotx2}
  \end{equation}
  Evidently $\norm{a^Tx_2} > \norm{\gamma}$ is equivalent to
  \begin{align}
    \norm{\frac{\lambda\gamma ^{m-1} + \alpha \gamma}
      {\sqrt{(\lambda\gamma^{m-1} + \alpha \gamma)^2 + (\alpha \delta)^2}} }
    &> \norm{\gamma} \\
    \norm{\frac{\lambda\gamma ^{m-2} + \alpha }
      {\sqrt{(\lambda\gamma^{m-1} + \alpha \gamma)^2 + (\alpha \delta)^2}} }
    &> 1 \\
    \norm{\lambda\gamma ^{m-2} + \alpha } 
    &> \sqrt{(\lambda\gamma^{m-1} + \alpha \gamma)^2 + (\alpha \delta)^2} \\
    (\lambda\gamma ^{m-2} + \alpha ) ^2 
    &> \gamma^2(\lambda\gamma^{m-2} + \alpha )^2 + (\alpha \delta)^2 \\
    (1-\gamma^2) (\lambda\gamma ^{m-2} + \alpha ) ^2 
    &>  (\alpha \delta)^2 \\
    \delta^2 (\lambda\gamma ^{m-2} + \alpha ) ^2 
    &>  (\alpha \delta)^2 \\
    \delta^2 \lambda\gamma^{m-2}(\lambda \gamma^{m-2} + 2 \alpha) &> 0 \;.
  \end{align}
  Now, since $\delta^2>0$, $\lambda>0$, and $\gamma^{m-2}>0$, this
  is equivalent to
  \begin{equation}
    \alpha > \frac{-\lambda \gamma^{m-2}}{2}\;.
  \end{equation}
\end{proof}

Let us discuss the requirement $\gamma^{m-2}>0$.  For $m$ even,
this is true for all $x_1$ given $a^Tx_1 \neq 0$, and so
Theorem~\ref{thm::RankOneWorks} provides that SS-HOPM moves ANY input
vector towards $a$ with probability one.  When $m$ is odd, the
property holds for half of the choices of $x_1$.  However, it is easy
to check using (\ref{eqn::adotx2}) that the sign of $\gamma$ is
preserved under the SS-HOPM update, so repeated applications of
SS-HOPM repeatedly improve $x_i$.

It would be nice to generalize Theorem~\ref{thm::RankOneWorks} to the
case $\mathcal{E}\neq 0$.  However, it cannot hold in the same form
because $a$ is not necessarily a stationary point of SS-HOPM in that
case.  Nonetheless, if the basin of attraction varies smoothly under
small perturbation to the original tensor, then we expect the basin of
attraction for the principal eigenvector to be large for small
$\mathcal{E}$.

We have one more interesting result on the application of SS-HOPM to
approximately-rank-one symmetric tensors, but it only holds for
even-mode tensors.

\begin{thm}
  Let $\mathcal{A}$ be defined as in (\ref{eqn::rankoneplusnoise}),
  and assume $\lambda > 0$, $m$ is even, and the shift parameter
  $\alpha$ for SS-HOPM satisfies $\alpha > \beta(\mathcal{E})$.  Then
  SS-HOPM always increases the generalized Rayleigh quotient and
  converges to an eigenvector.
\end{thm}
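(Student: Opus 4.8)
The plan is to extract the part of the Kolda--Mayo convergence result that really does the work, and observe that for even $m$ with $\lambda>0$ the rank-one term in (\ref{eqn::rankoneplusnoise}) can only help. For even $m$ the SS-HOPM iteration (\ref{eqn::shiftedpoweriter}) coincides with the unshifted higher-order power method applied to the genuine degree-$m$ form $\hat f(x)=\mathcal{A}x^m+\alpha\vnorm{x}^m$, which on the unit sphere differs from $\mathcal{A}x^m$ only by the constant $\alpha$; monotone increase of $\mathcal{A}x^m$ along the iteration then follows from convexity of $\hat f$ by the standard ``maximize the linearization and project'' argument: if $\hat f$ is convex and $x_{k+1}$ maximizes $\nabla\hat f(x_k)^T y$ over $\vnorm{y}=1$, then $\hat f(x_{k+1})\geq \hat f(x_k)+\nabla\hat f(x_k)^T(x_{k+1}-x_k)\geq \hat f(x_k)$ since $x_k$ is feasible. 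The bound $\alpha>\beta(\mathcal{A})$ quoted in Section~2 is just a crude sufficient condition for this convexity, obtained by controlling $\norm{\lambda_{\min}(\mathcal{A}x^{m-2})}$ by the spectral radius; for even $m$ the genuine requirement is only the one-sided bound $(m-1)\mathcal{A}x^{m-2}+\alpha I\succeq 0$ for all unit $x$ (equivalently, by homogeneity of degree $m-2$, for all $x$), since on the sphere the Hessian of $\alpha\vnorm{x}^m$ adds $\alpha(mI+m(m-2)xx^T)$, which is positive semidefinite whenever $\alpha>0$.

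Granting that reduction, the rest is a short estimate. First I would use Lemmas~\ref{lem::distributive} and \ref{lem::rankoneprod} to write $\mathcal{A}x^{m-2}=\lambda(a^Tx)^{m-2}aa^T+\mathcal{E}x^{m-2}$. Because $m$ is even, $m-2$ is even, so $(a^Tx)^{m-2}\geq 0$, and since $\lambda>0$ the first summand is positive semidefinite; hence for every unit $y$,
\begin{equation}
  y^T\mathcal{A}x^{m-2}y \;=\; \lambda(a^Tx)^{m-2}(a^Ty)^2+y^T\mathcal{E}x^{m-2}y \;\geq\; y^T\mathcal{E}x^{m-2}y \;\geq\; -\rho(\mathcal{E}x^{m-2}) \;\geq\; -\frac{\beta(\mathcal{E})}{m-1}\;,
\end{equation}
where the last step is the definition of $\beta(\mathcal{E})$ and the penultimate uses that $\mathcal{E}x^{m-2}$ is a symmetric matrix, so its smallest eigenvalue is at least $-\rho(\mathcal{E}x^{m-2})$. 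Thus $\lambda_{\min}(\mathcal{A}x^{m-2})\geq -\beta(\mathcal{E})/(m-1)$ for every unit $x$, and combining with the hypothesis $\alpha>\beta(\mathcal{E})$ gives $(m-1)\mathcal{A}x^{m-2}+\alpha I\succeq 0$ (in fact strictly) for all unit $x$. By the reduction of the first paragraph, $\hat f$ is convex, so SS-HOPM monotonically increases $\mathcal{A}x^m$; since the sphere is compact the values converge, and the usual accumulation-point argument for the SS-HOPM map (as in \cite{Kolda11}) then yields convergence to an eigenvector.

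The hard part, I expect, is not the computation but making the first paragraph rigorous: one must be careful that the even-$m$ convergence statement needs only the one-sided eigenvalue bound rather than the two-sided $\beta(\mathcal{A})$-bound quoted in Section~2, which means either citing the even-order form of the Kolda--Mayo theorem or reproducing the convexity argument in full. Once that is in place, the role of the hypothesis ``$m$ even'' is exactly that it forces $(a^Tx)^{m-2}\geq 0$, so the rank-one perturbation cannot push $\lambda_{\min}(\mathcal{A}x^{m-2})$ below $\lambda_{\min}(\mathcal{E}x^{m-2})$; for odd $m$ this fails and a term of size $(m-1)\lambda$ must be added to the shift, which is why the result is stated only for even-mode tensors.
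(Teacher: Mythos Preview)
Your proposal is correct and follows essentially the same route as the paper: reduce to the Kofidis--Manolakos/Kolda--Mayo convexity criterion $(m-1)\mathcal{A}x^{m-2}+\alpha I\succeq 0$, split $\mathcal{A}x^{m-2}=\lambda(a^Tx)^{m-2}aa^T+\mathcal{E}x^{m-2}$, observe that the rank-one term is positive semidefinite because $m$ is even and $\lambda>0$, and bound the noise term by $\beta(\mathcal{E})/(m-1)$. The only cosmetic difference is that the paper uses the auxiliary function $f(x)=\mathcal{A}x^m+(m\alpha/2)x^Tx$ rather than your homogeneous $\hat f(x)=\mathcal{A}x^m+\alpha\vnorm{x}^m$, but both yield the same Hessian condition on the sphere and the argument is otherwise identical.
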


\begin{proof}
  Define
  \begin{equation}
    f(x) = \mathcal{A}x^m + (m\alpha/2) (x^T x) \;.
  \end{equation}
  Notice that the second term of $f(x)$ is constant on the unit
  sphere, and the SS-HOPM iteration can be written
  \begin{equation}
    x_{k+1} = \frac{\nabla f(x_k)}{\vnorm{\nabla f(x_k)}} \;.
  \end{equation}
  This iteration is known \cite{Kofidis02, Kolda11} to increase $f(x)$ and
  converge to an eigenvector provided $\nabla^2 f(x)$ is positive
  semidefinite symmetric (PSSD).  We can write
  \begin{align}
    \nabla^2 f(x) &= m(m-1)\mathcal{A}x^{m-2} + m \alpha I \\
   &= m(m-1)\lambda (a^Tx)^{m-2} aa^T +  m(m-1)\mathcal{E}x^{m-2} + m\alpha I \;.
  \end{align}
  Since $\lambda > 0$ and $m$ is even, the first term is PSSD.  So it
  is sufficient to show that the remaining terms 
  \begin{equation}
    m(m-1)\mathcal{E}x^{m-2} + m\alpha I \;.
  \end{equation}
  sum to a PSSD matrix.  But since the last term is merely a spectral
  shift, this is assured provided
  \begin{equation}
    \min_x m \alpha - m(m-1)\rho(\mathcal{E}x^{m-2})) > 0 \;,
  \end{equation}
  which can be written
  \begin{equation}
    \alpha > \beta(\mathcal{E}) \;.
 \end{equation}
\end{proof}

We conducted a numerical experiment that illustrates the
  theorems in this section.  We define a tensor $\mathcal{A}$
  with $n=100$ and $m=4$, and pick $a=(1,0,0\dots)$.  To be
able to use $n$ even this large, we need to define $\mathcal{E}$ as a
sparse tensor.  To generate $\mathcal{E}$, we set $\mathcal{E}=0$,
pick 500 indices at random, and populate those entries with random
Gaussian numbers, zero-mean unit-variance.  We then permute those
indices in all 24 possible ways and copy values to make $\mathcal{E}$
symmetric.  Finally, we scale the elements so that
$\hat{\beta}(\mathcal{E})=0.03$, and so $\hat{\beta}(\mathcal{A})
\approx 3$.

Now, we let $\alpha$ range from $-1$ to $5$, and apply the shifted
power method with 10 random starts.  Let $x$ be the output of the
SS-HOPM, then a success is defined by $\norm{a^Tx} > 0.9$.
Figure~\ref{fig::expt} illustrates the success rate as a function of
$\alpha$.  To compute $\alpha_{min}$ we combine
Theorem~\ref{thm::boundtheta} and Theorem~\ref{thm::boundalpha}, to
get $\alpha_{min} = -0.3365$ for the principal eigenvector and
$\alpha_{min}=1.015$ for the spurious eigenvectors.  Evidently the
best chance of success for converging to the principal eigenvector is
between these two choices of $\alpha$; the fact that the
  success rate can be almost 100\% is supported by
  Theorem~\ref{thm::RankOneWorks}. Choosing $\alpha >
\hat{\beta}(\mathcal{A})$, even though it guarantees the SS-HOPM
iteration increases the generalized Rayleigh quotient and converges,
does not have the best chance of success for recovering the principal
eigenvector.  We speculate that choosing large $\alpha$ results in
more spurious eigenvectors being stable fixed points of the SS-HOPM
iteration, resulting in more spurious answers.

\begin{figure}
  \begin{center}
    \includegraphics[height=2in]{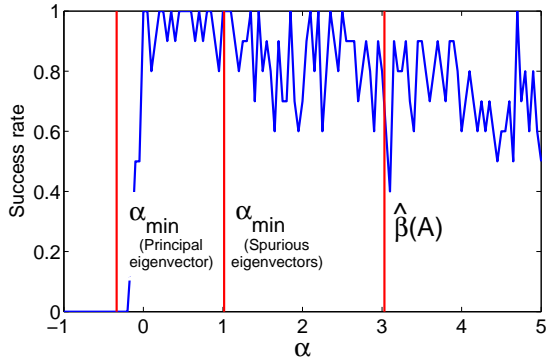}
  \end{center}
  \caption{Success rate for finding the best symmetric rank-one
    approximation of a symmetric tensor, as a function of shift
    parameter $\alpha$.  The values for $\alpha_{min}$ come from
    Theorem~\ref{thm::boundalpha} for the principal and spurious
    eigenvectors.  Recall $\alpha > \alpha_{min}$ is sufficient but
    not necessary for stability.  The best performance for SS-HOPM on
    rank-one approximation is when $\alpha$ is about the
    Theorem~\ref{thm::boundalpha} threshold for the principal
    eigenvector but below the threshold for the spurious
    eigenvectors.}
  \label{fig::expt}
\end{figure}

\section{Conclusion}

Our perturbative analysis establishes new facts about the structure of
approximately-rank-one symmetric tensors, and the application of
SS-HOPM to the rank-one approximation problem. We bound the closeness
of the best symmetric rank-one approximation, and show that any
sufficiently-large eigenpair informs us about the rank-one structure.
We show that in high dimensions, most of the generalized Rayleigh
quotient, whose critical points correspond to eigenvalues, is close to
zero; as a consequence, the principal eigenvalue is prominent.  We
establish that for rank-one symmetric tensors, under mild conditions,
SS-HOPM always moves an input vector towards the principal
eigenvector.  We also show that the principal eigenvector is a stable
fixed point for SS-HOPM under a wide choice of shift parameters, and
that SS-HOPM is guaranteed to converge to an eigenvector for a much
smaller choice of $\alpha$ in the approximately-rank-one case (for an
even number of modes) than the general case.  A complete
characterization of the basin of attraction for the principal
eigenvector remains an open question.  Finally, it is hoped that
better understanding of the symmetric rank-one problem may lead to
better of understanding of more complicated problems such as
Independent Components Analysis.

\section{Acknowledgements}

Thanks to Mark Jacobson, Urmi Holz, Tammy Kolda, Dianne O'Leary, and
Panayot Vassilevski for useful observations and guidance.

\bibliography{foo}

\end{document}